\DeclareMathAlphabet{\mathpzc}{OT1}{pzc}{bx}{it}
\DeclareMathAlphabet{\mathscrbf}{OMS}{mdugm}{b}{n}
\newcommand{\oder}[2]{\dfrac{\mathrm{d}{#1}}{\mathrm{d}{#2}}}
\newcommand{\myvec}[1]{\mathbfit{#1}}
\newcommand{\mymatrix}[1]{\mathbfss{#1}}
\newcommand{\ignore}[1]{}
\newcommand{\myquat}[1]{\mathsf{#1}}
\newcommand{\cliff}{\mathfrak{Cl}}
\newcounter{thetheorem}
\newenvironment{theorem}{
\refstepcounter{thetheorem}
\vspace{2mm}\noindent\textbf{Theorem \arabic{thetheorem}:}\itshape}{\\[-1mm] }
\newcounter{thedefinition}
\newcommand{\mansol}{\Gamma}
\newcommand{\celmech}{Celest. Mech.}
\newcommand{\cmda}{Celest. Mech. Dyn. Astron.}
\title[Stability in KS space and the Hopf fibration]{Stability and chaos in Kustaanheimo-Stiefel space induced by the Hopf fibration}
\author[J. Roa et al.]{
Javier Roa,$^1$\thanks{\emph{Present address:} Jet Propulsion Laboratory, California Institute of Technology, 4800 Oak Grove Drive, Pasadena, CA 91109-8099, USA. E-mail: javier.roa@upm.es (JR)}
Hodei Urrutxua$^{2,1}$ and
Jes\'us Pel\'aez$^1$
\\
% List of institutions
$^1$Space Dynamics Group, Technical University of Madrid, Pza. Cardenal Cisneros 3, Madrid, E-28040, Spain\\
$^2$Astronautics Research Group, University of Southampton, Highfield, SO17 1BJ, UK
}
\date{Accepted 2016 March 31. Received 2016 January 25; in original form 2015 September 14\\doi: \href{http://dx.doi.org/10.1093/mnras/stw780}{10.1093/mnras/stw780}\vspace{-3mm}}
\begin{document}
\label{firstpage}
\pagerange{\pageref{firstpage}--\pageref{lastpage}}
\maketitle

% Abstract of the paper
\begin{abstract}
The need for the extra dimension in Kustaanheimo-Stiefel (KS) regularization is explained by the topology of the Hopf fibration, which defines the geometry and structure of KS space. A trajectory in Cartesian space is represented by a four-dimensional manifold, called the fundamental manifold. Based on geometric and topological aspects classical concepts of stability are translated to KS language. The separation between manifolds of solutions generalizes the concept of Lyapunov stability. The dimension-raising nature of the fibration transforms fixed points, limit cycles, attractive sets, and Poincar\'e sections to higher-dimensional subspaces. From these concepts chaotic systems are studied. In strongly perturbed problems the numerical error can break the topological structure of KS space: points in a fiber are no longer transformed to the same point in Cartesian space. An observer in three dimensions will see orbits departing from the same initial conditions but diverging in time. This apparent randomness of the integration can only be understood in four dimensions. The concept of topological stability results in a simple method for estimating the time scale in which numerical simulations can be trusted. Ideally all trajectories departing from the same fiber should be KS transformed to a unique trajectory in three-dimensional space, because the fundamental manifold that they constitute is unique. By monitoring how trajectories departing from one fiber separate from the fundamental manifold a critical time, equivalent to the Lyapunov time, is estimated. These concepts are tested on $N$-body examples: the Pythagorean problem, and an example of field stars interacting with a binary.
\end{abstract}

% Select between one and six entries from the list of approved keywords.
% Don't make up new ones.
\begin{keywords}
celestial mechanics -- {stars:} binaries: general -- methods: numerical -- stars: kinematics and dynamics
\end{keywords}

%%%%%%%%%%%%%%%%%%%%%%%%%%%%%%%%%%%%%%%%%%%%%%%%%%

%%%%%%%%%%%%%%%%% BODY OF PAPER %%%%%%%%%%%%%%%%%%

\section{Introduction}

In the 1960's Eduard Stiefel started to organize the Oberwolfach Meetings on Celestial Mechanics, in an attempt to draw the interest of mathematicians into this subject. The first of those meetings took place in 1964 and Paul Kustaanheimo presented his work on describing Keplerian motion using spinors. His work on spinors combined with Stiefel's experience in regularization gave birth to the celebrated extension of the Levi-Civita transformation to the three-dimensional case, known as the Kustaanheimo-Stiefel transformation \citep{kustaanheimo1965perturbation}. This extension had eluded researchers since \citet{levi1920regularisation} presented his original regularization of the planar problem and \citet{hurwitzueber} proved that transformations of this type only exist in spaces of dimension $n=1,2,4,8$. This statement was further developed by \citet{adams1966k}.

Heinz \citet{hopf1931abbildungen} discovered a particular transformation from the unit 3-sphere, $\mathcal{S}^3$, onto the unit 2-sphere, $\mathcal{S}^2$, so that the preimage of each point in three-dimensional space turns out to be a circle on $\mathcal{S}^3$, called a fiber. All points in this fiber transform into the same point in three-dimensional space. Such transformation is referred to as the Hopf fibration. In fact, the Kustaanheimo-Stiefel transformation can be understood as a particular Hopf map \citep[][\S44]{stiefel1971linear}. Hopf was Stiefel's doctoral advisor and influenced other areas of his research, including the Hopf-Stiefel functions and the Stiefel manifolds. \citet{davtyan1987generalised} developed the generalization of KS transformation to the case $\mathbb{R}^8\to\mathbb{R}^5$ in order to transform the problem of the five-dimensional hydrogen atom into an eight-dimensional oscillator. They successfully rewrote the Hamiltonian of the hydrogen atom as the Hamiltonian of an eight-dimensional isotrope oscillator. \citet{deprit1994linearization} published an exhaustive treaty on the transformations underlying KS regularization. They focused on the topic of linearization, connecting with prior work from Lagrange.

The KS transformation provides a robust regularization scheme for dealing with close approaches or even impact trajectories. Close encounters between stars are one of the major challenges in $N$-body simulations. The first extensions of the KS transformation to $N$-body problems are due to \citet{peters1968treatment}, \citet{aarseth1971direct} and \citet{bettis1971treatment}. We refer to the work of \citet{szebehely1971recent} for a review of the methods developed in those years. \citet{aarseth1974regularization} focused on the three-body problem and their method was later generalized by \citet{heggie1974global}, who reformulated the Hamiltonian for dealing with an arbitrary number of particles. The shortcoming of Heggie's method is that it fails to reproduce collisions of more than two particles. \citet{mikkola1985practical} discovered a technique for avoiding this singularity by rewriting the Sundman time transformation in terms of the Lagrangian of the system. This method integrates $4N(N-1)+1$ equations of motion, so its use is recommended for few-body problems. The formulations based on the KS transformation have been improved throughout the years \citep{mikkola1998efficient,mikkola2008implementing}, especially since the development of the chain regularization techniques \citep{mikkola1989chain,mikkola1993implementation}. The introduction of relativistic corrections in the models has occupied different authors: \citet{kupi2006dynamics} modified the KS regularization for two-body close encounters in $N$-body simulations by introducing post-Newtonian effects; \citet{funato1996time} published a reformulation of the KS transformation focused on time-symmetric algorithms. \citet{aarseth1999nbody1,aarseth2003gravitational} presented several reviews of the evolution and keystones in the development of $N$-body simulations. The Levi-Civita variables have recently been recovered by \citet{lega2011detection} to detect resonant close encounters in the three-body problem, and used by \citet{astakhov2004capture} in combination with an extension of the phase space to analyze the elliptic restricted three-body problem.

Binary systems of stars appear naturally in simulations of star clusters and galaxies. The dynamical interaction between binaries and field stars is a challenging problem both from the theoretical and computational perspective due to its chaotic nature. The pioneering numerical exploration of stellar dynamics by \citet{aarseth1963dynamical,aarseth1966dynamical} was followed by \citet{heggie1975binary}, who analyzed different configurations of binary encounters using KS regularization. In a series of papers \citet{hut1983binary} and \citet{hut1993binary} focused on the gravitational scattering of field stars interacting with stellar binaries. They relied on extensive Monte Carlo simulations, and papers in the same series focused on deriving analytic solutions to the problem \citep{hut1983binaryII,heggie1993binary}. \citet{kiseleva1994note} addressed the problem of the stability of triple stars, seeking empirical formulas for modeling the time when the system becomes unstable. Stellar collisions between binary systems have been studied by \citet{fregeau2004stellar}. \citet{tout1997rapid} published an efficient algorithm for the simulation of additional physical phenomena occurring in stellar binaries. In state of the art $N$-body codes such as \textsc{nbody6} binaries and close two-body encounters are analyzed using KS regularization based on the Stumpff functions \citep{mikkola1998efficient}, as described by the author \citep{aarseth1999nbody1}.

The topology of the KS transformation has motivated many studies on the subject. \citet{velte1978concerning} explored its representation in the language of quaternions, a task that also occupied \citet{vivarelli1983ks} and \citet{waldvogel2006order,waldvogel2006quaternions}. Different representations of KS regularization have been published by \citet{vivarelli1986kepler,vivarelli1994ks}, including a representation in hypercomplex algebra \citep{vivarelli1985ks}. Hypercomplex numbers have recently been recovered by \citet{roa2015orbit} to derive a regularization scheme that describes orbital dynamics in the geometry of Minkowski space-time. \citet{deprit1994linearization} re-derived the KS transformation by doubling, using quaternions and octonions. They conclude by deriving the Burdet-Ferr\'andiz transformation from the foundations of KS regularization \citep{ferrandiz1987general}. \citet{saha2009interpreting} has recently reformulated the problem by combining quaternions with the Hamiltonian formalism. \citet{elbialy2007kustaanheimo} approached the KS transformation from an alternative perspective, focusing on the connection with the Hopf map and the topological structure of the transformation. In a series of papers \citet{fukushima2003efficient,fukushima2004efficient,fukushima2005efficientTE} analyzed different numerical aspects of the regularization, seeking scaling factors that guarantee the conservation of the integrals of motion and time-elements for improving the stability of the time transformation.

%Embedding integrals of motion in the formulation stabilizes the numerical integration of the differential equations \citep{baumgarte1972stabilization}. The KS transformation benefits from this contrivance and provides a robust scheme for numerical integration. In a series of papers \citet{fukushima2003efficient,fukushima2004efficient,fukushima2005efficientTE} analyzed different numerical aspects of the regularization, seeking for scaling factors that improve the conservation of the integrals of motion, and introducing a time-element for improving the stability of the time transformation. The stability of the integration in KS space is typically analyzed from a numerical perspective. The separation between nearby trajectories is measured in terms of the fictitious time. 

In this paper we seek a theory of stability in KS space based on the physics of the problem. We generalize the concepts of fixed points, limit cycles, attractors, Lyapunov and structural stability, and Poincar\'e maps to four-dimensions by means of the KS transformation. By synchronizing the relative dynamics of nearby trajectories in terms of the physical (and not fictitious) time, conclusions about the stability of the system derived in KS variables apply also to Cartesian. From these definitions we advance to chaotic systems and show how numerical errors can destroy the topological structure of KS space: ideally every point in a fiber transforms to the same point in Cartesian space (so it is typically chosen arbitrarily); but for strongly perturbed problems trajectories emanating from the same fiber may separate in time and no longer represent the same trajectory in three-dimensions. $N$-body problems are the main application we consider, but the results can be extended to any application of KS regularization.

In Section~\ref{Sec:KS_Hopf} the connection between the KS transformation and the Hopf fibration is established. The required equations are presented, emphasizing the geometric aspects. Section~\ref{Sec:stability} is devoted to defining the concept of fundamental manifold, and to showing how Lyapunov and Poincar\'e stability are understood in four dimensions. In Section~\ref{Sec:chaos} we study the exponential divergence of trajectories in KS space and derive a simple method to estimate an indicator that is equivalent to the Lyapunov time. Finally, practical examples are presented in Section~\ref{Sec:scattering}.

\section{The KS transformation as a Hopf map}\label{Sec:KS_Hopf}
Let $\mathbfit{r}=(x,y,z)^\top$ be the position vector of a point in Cartesian space $\mathbb{E}^3$, projected in an inertial frame $\mathfrak{I}$, and let $\myvec{x}=(x,y,z,0)^\top$ be its extension to $\mathbb{R}^4$. \citet{kustaanheimo1965perturbation} found a regularization of the two-body problem introducing the new coordinates $\myvec{u}=(u_1,u_2,u_3,u_4)^\top$, defined in the parametric space $\mathbb{U}^4$ embedded in $\mathbb{R}^4$. The KS transformation is defined explicitly as 
\begin{equation}\label{Eq:KS}
	\mathbfit{x}= \mathscr{K}(\myvec{u}) =\mathbfss{L}(\mathbfit{u})\,\mathbfit{u}
\end{equation}
where $\mathbfss{L}(\mathbfit{u})$ is known as the KS matrix:
\begin{equation}
	\mathbfss{L}(\mathbfit{u}) = \left[
		\begin{array}{rrrr}
			u_1 & -u_2 & -u_3 &  u_4  \\ 
			u_2 &  u_1 & -u_4 & -u_3  \\ 
			u_3 &  u_4 &  u_1 &  u_2  \\ 
			u_4 & -u_3 &  u_2 & -u_1  
		\end{array}
	\right]
\end{equation}
The KS matrix is $r$-orthogonal, i.e.
\begin{equation}\label{Eq:orthogonal}
	\mathbfss{L}^{-1}(\mathbfit{u}) = \frac{1}{r}\,\mathbfss{L}^\top(\mathbfit{u})
\end{equation}
Every point $\mathbfit{u}$ is KS-mapped to one single point in Cartesian space $\mathbb{E}^3$. These equations are a particular case of the more general map proposed by \citet{hopf1931abbildungen}. 

Regularizing the equations of orbital motion by means of the KS transformation requires the time transformation due to \cite{sundman1912memoire}:
\begin{equation}\label{Eq:sundman}
	\mathrm{d}t = r\,\mathrm{d}s
\end{equation}
where $s$ is referred to as the fictitious time, and $r=||\mathbfit{r}||$. Derivatives with respect to physical time $t$ will be denoted by a dot, $\dot{\mathbfit{r}}$, whereas derivatives with respect to fictitious time will be written $\mathbfit{r}^\prime$. The radial distance $r$ relates to the KS variables by means of
\begin{equation}\label{Eq:radius}
	r = u_1^2+u_2^2+u_3^2+u_4^2 = ||\mathbfit{u}||^2
\end{equation}
The KS transformation maps fibers on the 3-sphere of radius $\sqrt{r}$ in $\mathbb{U}^4$ to points on the 2-sphere of radius $r$ in $\mathbb{E}^3$.

\citet{hopf1931abbildungen} proved that the transformation from the 3-sphere to the 2-sphere maps circles to single points, defining the structure $\mathcal{S}^1\hookrightarrow\mathcal{S}^3\to\mathcal{S}^2$. Equation~\eqref{Eq:KS} is invariant under the gauge transformation $\mathscr{R}:\mathbfit{u}\mapsto\mathbfit{w}$, 
\begin{equation}\label{Eq:identity}
	\mathbfit{x} = \mathbfss{L}(\mathbfit{u})\,\mathbfit{u} = 	\mathbfss{L}(\mathbfit{w})\,\mathbfit{w}
\end{equation}
Vector $\mathbfit{w}=(w_1,w_2,w_3,w_4)^\top$ takes the form:
\begin{equation}\label{Eq:fiber_matrix}
	\mathbfit{w} = \mathscr{R}(\vartheta;\myvec{u}) = \mathbfss{R}(\vartheta)\,\mathbfit{u}
\end{equation}
where $\mathbfss{R}(\vartheta)$ is the matrix 
\begin{equation}\label{Eq:matrix_R}
	\mathbfss{R}(\vartheta) = \left[ \begin{array}{cccc}
		\cos\vartheta & 0 & 0 & -\sin\vartheta \\
		0 & \cos\vartheta & \sin\vartheta & 0 \\
		0 & -\sin\vartheta & \cos\vartheta & 0 \\
		\sin \vartheta & 0 & 0 & \cos\vartheta
\end{array}	 \right]
\end{equation}
This matrix is orthogonal, and also
\begin{equation}
	\mathbfss{R}^\top(\vartheta) = \mathbfss{R}(-\vartheta) 
\end{equation}
Being $\mathbfss{R}(\vartheta)$ orthogonal Eq.~\eqref{Eq:fiber_matrix} can be inverted to provide
\begin{equation}\label{Eq:inverse_fiber}
	\mathbfit{u} = \mathscr{R}^{-1}(\vartheta;\myvec{w}) = \mathbfss{R}(-\vartheta)\,\mathbfit{w}
\end{equation}
The transformation $\mathscr{R}$ preserves the radius $r$, i.e.
\begin{equation}
	r = \myvec{u}\cdot\myvec{u} = \myvec{w}\cdot\myvec{w}
\end{equation}
Since the radius is invariant to the selection of the point in the fiber it follows that the physical time, defined by Eq.~\eqref{Eq:sundman}, is $\mathscr{R}$-invariant as well\footnote{Alternative forms of the time transformation can be found in the literature, generalized as $\mathrm{d}t/\mathrm{d}s=g(\myvec{x},\dot{\myvec{x}})$. We refer to the work by \citet{zare1975time} for a survey of transformations involving different powers of the radial distance, the potential, the Lagrangian, or combinations of the relative separations for the case of $N$-body problems. The vectors $\myvec{x}$ and $\dot{\myvec{x}}$ are $\mathscr{R}$-invariant, so the uniqueness of the physical time is also guaranteed for more general transformations.}. The identity in Eq.~\eqref{Eq:identity} and the $r$-orthogonality of matrix $\mymatrix{L}$ furnish a useful relation:
\begin{equation}
	\myvec{w} = \mymatrix{L}^{-1}(\myvec{w})\,\mymatrix{L}(\myvec{u})\,\myvec{u} = \mymatrix{R}(\vartheta)\,\myvec{u}\implies \mymatrix{L}^\top(\myvec{w})\,\mymatrix{L}(\myvec{u}) = r\mymatrix{R}(\vartheta)
\end{equation}
% =  \frac{1}{r}\mymatrix{L}^{\top}(\myvec{w})\,\mymatrix{L}(\myvec{u})\,\myvec{u}

The angular variable $\vartheta$ parameterizes the Hopf fibration in four-dimensional space. In fact, Eq.~\eqref{Eq:fiber_matrix} defines explicitly the fiber $\mathcal{F}$: changing the value of $\vartheta$ defines different points in $\mathbb{U}^4$ that are KS transformed to the same point in $\mathbb{E}^3$. This yields the definition of fiber as the subset of all points in four-dimensional space that are mapped into the same point in $\mathbb{E}^3$ by means of the KS transformation,
\begin{equation}
	\mathcal{F} = \big\{ \mathbfit{w}(\vartheta)\in\mathbb{U}^4\,\big|\,\mathbfit{x}=\mathscr{K}(\mathbfit{w}), \,\forall \vartheta\in[0,2\pi] \big\}
\end{equation}
A different fiber transforms into a different point. Consequently, two fibers cannot intersect because the intersection point will then be transformed into the same point in $\mathbb{E}^3$ despite belonging to two different fibers \citep[][p.~271]{stiefel1971linear}. The stereographic projection of the fibers onto $\mathbb{E}^3$ \citep[see for example][]{griguolo2012correlators} reveals that two fibers in KS space are connected by a Hopf link, as sketched in Fig. \ref{Fig:link}.
\begin{figure}
	\centering
	\includegraphics[width=.55\linewidth]{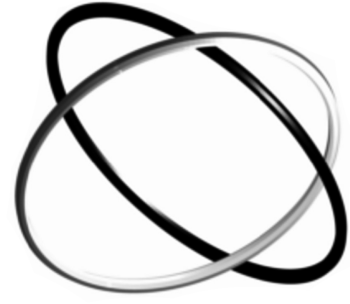}
	\caption{Hopf link connecting two different fibers in KS space. The Hopf fibration is visualized by means of the stereographic projection to $\mathbb{E}^3$.\label{Fig:link}}
\end{figure}

\subsection{The velocity and the bilinear relation}\label{Sec:velocity}
Let $\mathbfit{u},\mathbfit{w}\in\mathbb{U}^4$. The KS matrix satisfies the property
\begin{equation}
	\mathbfss{L}(\mathbfit{u})\,\mathbfit{w} = \mathbfss{L}(\mathbfit{w})\,\mathbfit{u} \iff \ell(\mathbfit{u},\mathbfit{w})=0
\end{equation}
where $\ell(\mathbfit{u},\mathbfit{w})$ denotes the bilinear relation
\begin{equation}
	\ell(\mathbfit{u},\mathbfit{w}) = u_1w_4 - u_2w_3 + u_3w_2 - u_4w_1
\end{equation}

Differentiating Eq.~\eqref{Eq:KS} with respect to fictitious time and taking into account the time transformation from Eq.~\eqref{Eq:sundman} yields
\begin{equation}\label{Eq:velocity}
	\dot{\mathbfit{x}} = \frac{2}{r}\,\mathbfss{L}(\mathbfit{u})\,\mathbfit{u}^\prime
\end{equation}
where $\dot{\myvec{x}}=(v_x,v_y,v_z,0)^\top$ is the velocity vector extended to $\mathbb{R}^4$. Note that the fourth component is zero, which means
\begin{equation}
	\ell(\mathbfit{u},\mathbfit{u}^\prime) = 0
\end{equation}
Moreover, \citet[][p.~29]{stiefel1971linear} proved that $\ell(\mathbfit{u},\mathbfit{u}^\prime)=0$ is a first integral of orbital motion. Provided that the KS transformation is $\mathscr{R}$-invariant, it follows that the bilinear relation holds for all points in a given fiber,
\begin{equation}\label{Eq:bilin_V_W}
	\ell(\mathbfit{w},\mathbfit{w}^\prime) = \ell(\mathbfit{u},\mathbfit{u}^\prime) = 0
\end{equation}

Let $\mathbfit{t}(\vartheta)\in\mathbb{U}^4$ denote the vector that is tangent to a fiber $\mathcal{F}$ at $\mathbfit{w}(\vartheta)$. The direction of $\mathbfit{t}$ can be obtained by differentiating Eq.~\eqref{Eq:fiber_matrix} with respect to $\vartheta$. It reads
\begin{equation}
	\mathbfit{t} = \mathbfss{R}^\ast(\vartheta)\,\mathbfit{u}
\end{equation}
where $ \mathbfss{R}^\ast(\vartheta)=\mathrm{d}\mathbfss{R}(\vartheta)/\mathrm{d}\vartheta$ is obtained by differentiating Eq.~\eqref{Eq:matrix_R}.
%\begin{equation}\label{Eq:diff_matR}
%	 \mathbfss{R}^\ast(\vartheta) = \left[ \begin{array}{cccc}
%		-\sin\vartheta & 0 & 0 & -\cos\vartheta \\
%		0 & -\sin\vartheta & \cos\vartheta & 0 \\
%		0 & -\cos\vartheta & -\sin\vartheta & 0 \\
%		\cos \vartheta & 0 & 0 & -\sin\vartheta
%\end{array}	 \right]
%\end{equation}
Taking as an example $\vartheta=0$ yields the components of the tangent vector $\mathbfit{t}$,
\begin{equation}\label{Eq:tangent}
	\mathbfit{t} = (-u_4,u_3,-u_2,u_1)
\end{equation}
This unveils a geometric interpretation of the bilinear relation $\ell(\mathbfit{u},\mathbfit{v})=0$: it can be understood as an orthogonality condition \citep[][\S43]{stiefel1971linear}, since
\begin{equation}
	\ell(\mathbfit{u},\mathbfit{v})=0\iff\mathbfit{v}\cdot\mathbfit{t}=0
\end{equation} 
Two vectors $\mathbfit{u}$ and $\mathbfit{v}$ satisfy the bilinear relation $\ell(\mathbfit{u},\mathbfit{v})=0$ if $\mathbfit{v}$ is orthogonal to the fiber through $\mathbfit{u}$. Provided that $\ell(\mathbfit{u},\mathbfit{u}^\prime)=0$ holds naturally and it is an integral of motion it follows that the velocity in KS space, $\mathbfit{u}^\prime$, is always orthogonal to the fiber at $\mathbfit{u}$. The fiber bundle $\mathcal{S}^1\hookrightarrow\mathcal{S}^3\to\mathcal{S}^2$ shows that the fibers constituting the 3-sphere are circles, corresponding to points on the 2-sphere. Indeed, the tangent vector $\mathbfit{t}(\vartheta)$ is always perpendicular to the position vector $\mathbfit{w}(\vartheta)$,
\begin{equation}
	\mathbfit{w}\cdot\mathbfit{t} = \big[ \mathbfss{R}(\vartheta)\,\mathbfit{u} \big]\cdot\left[  \mathbfss{R}^\ast(\vartheta)\,\mathbfit{u} \right] = \myvec{u}\cdot\left\{ \mymatrix{R}^\top(\vartheta) \left[  \mathbfss{R}^\ast(\vartheta)\myvec{u} \right] \right\} =0
\end{equation}
no matter the value of $\vartheta$. Appendix~\ref{Sec:Appendix} is devoted to the definition of orthogonal bases and a cross product in $\mathbb{U}^4$.

\subsection{The inverse mapping}
The inverse KS transformation $\mathscr{K}^{-1}:\mathbfit{x}\mapsto\mathbfit{u}$ maps points to fibers. Introducing the auxiliary vector $\myvec{v}=(v_1,v_2,v_3,v_4)^\top$ the inverse mapping takes the form
\begin{equation}\label{Eq:inverse_KS}
	\begin{aligned}
		 v_1 & = R \sin\theta \\
		 v_2 & = \frac{1}{2R}(y\sin\theta - z\cos\theta) \\
		 v_3 & = \frac{1}{2R}(y\cos\theta + z\sin\theta) \\
		 v_4 & = -R\cos\theta
	\end{aligned}
\end{equation}
Here $R^2=(r+|x|)/2$. The angle $\theta$ is different from $\vartheta$: the points on the fiber are parameterized by $\theta$, which is measured with respect to a certain axis; given two points $\mathbfit{u}$ and $\mathbfit{w}$ obtained by setting $\theta=\theta_1$ and $\theta_2$ in Eq.~\eqref{Eq:inverse_KS}, respectively, they relate by virtue of Eq.~\eqref{Eq:fiber_matrix}. This equation then provides the relation:
\begin{equation}
	\theta_2 - \theta_1 = \vartheta 
\end{equation}
meaning that the variable $\vartheta$ denotes the angular \emph{separation} between points along the same fiber. The value of $\theta$ depends on the position of the reference axis, whereas $\vartheta$ is independent from the selection of the axis.

The point $\mathbfit{u}$ is finally defined as
\begin{equation}\label{Eq:ic1}
\begin{aligned}
	\mathbfit{u}=(v_1,v_2,v_3,v_4)^\top \quad & \text{if}\;x\geq0 \\
	\mathbfit{u}=(v_2,v_1,v_4,v_3)^\top \quad & \text{if}\;x<0
\end{aligned}
\end{equation}
Two alternative expressions are considered for avoiding potential singularities. They differ in the selection of the axes in KS space. From this result any point $\mathbfit{w}_0$ in the initial fiber $\mathcal{F}_0$ can be obtained from 
\begin{equation}\label{Eq:initial_w0}
\begin{aligned}
	& \mathbfit{w}_0(\vartheta) =  \mathbfss{R}(\vartheta)\,\mathbfit{u}_0\quad &&\text{if}\;x_0\geq0\\
	& \mathbfit{w}_0(\vartheta) =  \mathbfss{R}(-\vartheta)\,\mathbfit{u}_0\quad &&\text{if}\;x_0<0
\end{aligned}
\end{equation}
so that $\mathbfit{x}_0=\mathbfss{L}(\mathbfit{w}_0)\mathbfit{w}_0$. The sign criterion complies with the different definitions of the axes in KS space. See Appendix~\ref{Sec:Appendix} for a discussion on the orthogonal frames attached to the fiber.

The velocity in $\mathbb{U}^4$ is obtained by inverting Eq.~\eqref{Eq:velocity}, taking into account the orthogonality relation in Eq.~\eqref{Eq:orthogonal}:
\begin{equation}
	\mathbfit{u}^\prime = \frac{1}{2}\mathbfss{L}^\top(\mathbfit{u})\,\dot{\mathbfit{x}}
\end{equation}

The geometry of the inverse KS transformation can be studied from Fig. \ref{Fig:fibers}. The grey sphere is three-dimensional and of radius $r$. The black arc corresponds to a set of initial conditions, $\mathbfit{r}_j$. The white dot represents one particular position in $\mathbb{E}^3$, $\mathbfit{r}_i$. The inverse KS transformation applied to $\mathbfit{r}_i$ yields the fiber $\mathcal{F}_i$. The fiber is represented by means of its stereographic projection to $\mathbb{R}^3$. The black surface consists of all the fibers $\mathcal{F}_j$ that are KS mapped to the points $\mathbfit{r}_j$. In this figure it is possible to observe the Hopf link connecting different fibers.
\begin{figure}
	\centering
	\includegraphics[width=\linewidth]{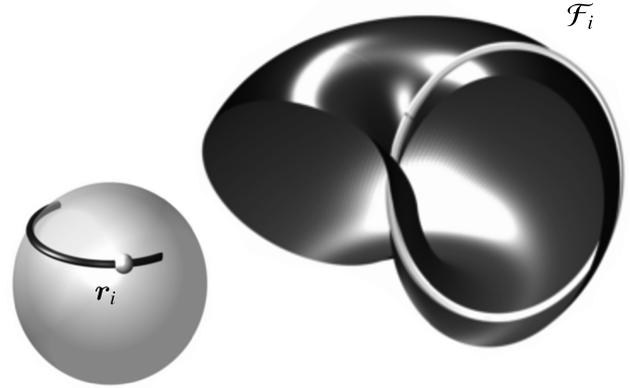}
	\caption{Stereographic projection to $\mathbb{R}^3$ of the Hopf fibration corresponding to a set of initial positions on the three-dimensional sphere of radius $r$. The black semi-torus consists of all the fibers that transform into the semi-circumference on the 2-sphere on the bottom-left corner. One single fiber $\mathcal{F}_i$ is plotted in white, corresponding to $\mathbfit{r}_i$.\label{Fig:fibers}}
\end{figure}

%\clearpage
%\newpage

\section{Stability in KS space}\label{Sec:stability}
%The Hopf fibration defines the topology of the connection between the KS and the Cartesian spaces. The theory of stability for solutions in $\mathbb{E}^3$ can be extended to $\mathbb{U}^4$, including the definitions of stability from Lyapunov and Poincar\'e, by considering the topology of KS transformation. 

The classical concepts of stability from Lyapunov and Poincar\'e can be translated to KS language by considering the topology of the transformation. First, we introduce an important theorem regarding the geometry of the fibers. From this theorem the concept of the fundamental manifold arises naturally. 

The stability concepts here presented are not based in numerical analyses; previous studies about the stability of KS transformation \citep{baumgarte1972numerical,baumgarte1976stabilized,arakida2000long} focus on the behavior of the \emph{numerical procedure}. We aim for a series of definitions that capture the \emph{physical behavior}, that should be independent from the formulation of the dynamics.

\subsection{A central theorem}
%Let $\mathbfit{r}(t)$ denote the trajectory of a particle in the three-dimensional space, understood as a continuum of points in $\mathbb{R}^3$. The inverse KS transformation applied to a single point $\mathbfit{r}_j$ defines a fiber $\mathcal{F}_j$ in the KS space. Just like the set of points $\mathbfit{r}_i$ defines a curve in $\mathbb{R}^3$, the set of all fibers $\mathcal{F}_i$ defines \new{a manifold in $\mathbb{U}^4$}.

Two fibers can never intersect, as discussed when formally defining a fiber. It is now possible to advance on this statement and to formulate a fundamental property of the KS transformation: 

\begin{theorem}\label{Thm:angle} (Conservation of $\vartheta$)
The angular separation between two trajectories emanating from $\mathcal{F}_0$, measured along every fiber, is constant. That is
\begin{equation}
	\mathbfit{w}_0 = {\text{\normalfont \bfseries \textsf{R}}}(\vartheta_0)\,\mathbfit{u}_0\implies\mathbfit{w}(s) = {\text{\normalfont \bfseries \textsf{R}}}(\vartheta_0)\,\mathbfit{u}(s)
\end{equation}
for any value of $\vartheta_0$ and the fictitious time $s$. This is an intrinsic property of KS space and does not depend on the dynamics of the system.
\end{theorem}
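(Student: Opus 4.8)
The plan is to prove the statement as a purely kinematic fact, by showing that the angular separation $\vartheta(s)$ measured along the common physical orbit cannot drift. Let $(\myvec{x}_0,\dot{\myvec{x}}_0)$ be the shared physical initial state; let $\myvec{u}(s)$ be the KS trajectory emanating from the fiber point $\myvec{u}_0$ and $\myvec{w}(s)$ the one emanating from $\myvec{w}_0=\mymatrix{R}(\vartheta_0)\,\myvec{u}_0$. First I would check that these two lifts project to the \emph{same} Cartesian curve evaluated at the \emph{same} fictitious time. At $s=0$ this is clear: $\myvec{w}_0$ and $\myvec{u}_0$ lie on one fiber so both give $\myvec{x}_0$, and from $\myvec{u}^\prime=\tfrac{1}{2}\mymatrix{L}^\top(\myvec{u})\,\dot{\myvec{x}}$ together with $\mymatrix{L}^\top(\myvec{w})=\mymatrix{R}(\vartheta_0)\,\mymatrix{L}^\top(\myvec{u})$ — which is immediate from $\mymatrix{L}^\top(\myvec{w})\,\mymatrix{L}(\myvec{u})=r\,\mymatrix{R}(\vartheta_0)$ and the $r$-orthogonality of $\mymatrix{L}$ — the two KS velocities represent the same $\dot{\myvec{x}}_0$. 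Since the physical time and the radius $r$ are $\mathscr{R}$-invariant and the regularized flow in $s$ is, by construction, a lift of the Cartesian flow in $t$ with $\mathrm{d}t=r\,\mathrm{d}s$ (this holds whatever the force model), it follows that $\myvec{x}(s)$, $\dot{\myvec{x}}(s)$ and $r(s)$ agree along the two curves for every $s$.

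Granting that, for each $s$ the points $\myvec{u}(s)$ and $\myvec{w}(s)$ lie on a single fiber, so there is a smooth function $\vartheta(s)$ with $\myvec{w}(s)=\mymatrix{R}(\vartheta(s))\,\myvec{u}(s)$ and $\vartheta(0)=\vartheta_0$; the goal is $\vartheta^\prime\equiv0$. Differentiating the relation with respect to $s$ gives $\myvec{w}^\prime=\mymatrix{R}(\vartheta)\,\myvec{u}^\prime+\vartheta^\prime\,\mymatrix{R}^\ast(\vartheta)\,\myvec{u}$. On the other hand the velocity formula applied to $\myvec{w}$, using the common $\dot{\myvec{x}}(s)$ and again $\mymatrix{L}^\top(\myvec{w})=\mymatrix{R}(\vartheta)\,\mymatrix{L}^\top(\myvec{u})$, yields $\myvec{w}^\prime=\tfrac{1}{2}\mymatrix{L}^\top(\myvec{w})\,\dot{\myvec{x}}=\mymatrix{R}(\vartheta)\,\myvec{u}^\prime$. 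Subtracting the two expressions leaves $\vartheta^\prime(s)\,\mymatrix{R}^\ast(\vartheta(s))\,\myvec{u}(s)=\myvec{0}$. But $\mymatrix{R}^\ast(\vartheta)\,\myvec{u}$ is exactly the fiber tangent $\myvec{t}$, whose norm equals $\|\myvec{u}\|=\sqrt{r}$ and therefore does not vanish away from collision, so $\vartheta^\prime(s)=0$ throughout the interval of existence and hence $\vartheta(s)\equiv\vartheta_0$, which is the claim.

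The only delicate point — and the main obstacle — is the first one: justifying that the two KS lifts descend to the very same physical orbit at matching values of $s$. This is where one has to invoke that the KS transformation together with the Sundman transformation genuinely conjugates the KS flow to the Cartesian flow, independently of the perturbations present; once this is accepted, no further dynamical information enters and the rest is the short computation above, resting only on the velocity relation, the identity $\mymatrix{L}^\top(\myvec{w})\,\mymatrix{L}(\myvec{u})=r\,\mymatrix{R}(\vartheta)$, and the non-vanishing of $\myvec{t}=\mymatrix{R}^\ast(\vartheta)\,\myvec{u}$. An equivalent alternative, if one prefers an argument that never mentions the Cartesian orbit, is to observe that the regularized equations $\myvec{u}^{\prime\prime}=\myvec{G}(\myvec{u},\myvec{u}^\prime)$ have a right-hand side assembled solely from $\mathscr{R}$-invariant physical quantities premultiplied by $\mymatrix{L}^\top(\myvec{u})$; since $\mymatrix{R}(\vartheta_0)$ is constant and $\mymatrix{L}^\top(\mymatrix{R}(\vartheta_0)\,\myvec{u})=\mymatrix{R}(\vartheta_0)\,\mymatrix{L}^\top(\myvec{u})$, the curve $\mymatrix{R}(\vartheta_0)\,\myvec{u}(s)$ satisfies the same system with initial data $(\myvec{w}_0,\myvec{w}_0^\prime)$, and uniqueness of solutions closes the argument — at the cost of having to exhibit $\myvec{G}$ explicitly, which the geometric version avoids.
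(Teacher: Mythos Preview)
Your proof is correct and reaches $\vartheta'(s)=0$, but by a somewhat different mechanism than the paper. The paper never invokes the Cartesian orbit or the velocity inversion $\myvec{u}'=\tfrac{1}{2}\mymatrix{L}^\top(\myvec{u})\,\dot{\myvec{x}}$; instead it simply writes $\myvec{w}(s)=\mymatrix{R}(\vartheta(s))\,\myvec{u}(s)$, differentiates, and feeds $\myvec{w}$ and $\myvec{w}'$ into the bilinear form $\ell$. A direct expansion gives $\ell\big(\mymatrix{R}(\vartheta)\myvec{u},\,\mymatrix{R}'(\vartheta)\myvec{u}+\mymatrix{R}(\vartheta)\myvec{u}'\big)=r\,\vartheta'+\ell(\myvec{u},\myvec{u}')$, and the first integrals $\ell(\myvec{u},\myvec{u}')=\ell(\myvec{w},\myvec{w}')=0$ then force $\vartheta'=0$ for $r>0$. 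Your route is equivalent in content---the velocity formula you apply to $\myvec{w}$ \emph{is} the constraint $\ell(\myvec{w},\myvec{w}')=0$ in disguise---but it has the merit of making explicit, and flagging as the delicate step, the assumption that both lifts cover the same physical curve at matching values of $s$; the paper takes this for granted when it asserts that the two trajectories ``relate by means of'' the fiber map for all $s$. The paper's argument is shorter and purely algebraic in the bilinear form; yours is more transparent about the underlying hypothesis, and your closing alternative via $\mathscr{R}$-equivariance of the regularized right-hand side plus uniqueness of solutions is a clean way to bypass the descent-to-$\mathbb{E}^3$ issue altogether.
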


\begin{proof}
Consider two trajectories in KS space, $\mathbfit{u}=\mathbfit{u}(s)$ and $\mathbfit{w}=\mathbfit{w}(s)$, departing from the same fiber $\mathcal{F}_0$. They relate by means of Eq.~\eqref{Eq:fiber_matrix}. In the most general case the angle $\vartheta$ can be described by a function $\vartheta=\vartheta(s)$ and initially it is $\vartheta(0)=\vartheta_0$. The trajectories evolve according to
\begin{equation}\label{Eq:ws_proof}
	\mathbfit{w}(s) = \mathscr{R}(\vartheta;\myvec{u}(s)) = \mathbfss{R}(\vartheta)\mathbfit{u}(s)
\end{equation}
Differentiating this equation with respect to fictitious time yields
\begin{equation}\label{Eq:dwds_proof}
	\mathbfit{w}^\prime(s) = \mathbfss{R}^\prime(\vartheta)\,\mathbfit{u}(s) + \mathbfss{R}(\vartheta)\,\mathbfit{u}^\prime(s)
\end{equation}
Equation~\eqref{Eq:bilin_V_W} proved that the bilinear relation holds for any trajectory in KS space, meaning that $\ell(\mathbfit{w},\mathbfit{w}^\prime)=\ell(\mathbfit{u},\mathbfit{u}^\prime)=0$. This renders:
\begin{equation}
	\ell(\mathbfit{w},\mathbfit{w}^\prime) = \ell\big(\mathbfss{R}(\vartheta)\mathbfit{u},\mathbfss{R}^\prime(\vartheta)\,\mathbfit{u} + \mathbfss{R}(\vartheta)\,\mathbfit{u}^\prime\big) = 0
\end{equation}
after substituting Eqs.~\eqref{Eq:ws_proof} and \eqref{Eq:dwds_proof}. Expanding the bilinear relation in the previous expression shows that
\begin{alignat}{2}
	\ell\big(\mathbfss{R}(\vartheta)\mathbfit{u},\mathbfss{R}^\prime(\vartheta)\,\mathbfit{u} + \mathbfss{R}(\vartheta)\,\mathbfit{u}^\prime\big) = r\,\oder{\vartheta}{s} + \ell(\mathbfit{u},\mathbfit{u}^\prime)  = 0
\end{alignat}
Assuming that $r>0$ and considering that $\ell(\mathbfit{u},\mathbfit{u}^\prime)=0$ one gets
\begin{equation}
	\oder{\vartheta}{s} = 0\implies \vartheta(s) = \vartheta_0 
\end{equation}
so the angular separation along every fiber remains constant. We emphasize that no assumptions about the dynamics have been made.
\end{proof}

A direct consequence of this result is the relation between the velocities along the trajectories $\myvec{u}(s)$ and $\myvec{w}(s)$:
\begin{equation}
	\myvec{w}^\prime(s) = \mathscr{R}(\vartheta;\myvec{u}^\prime(s))
\end{equation}
%Despite the conservation of $\vartheta$ the absolute angle $\theta$

\subsection{The fundamental manifold $\mansol$}
A trajectory in Cartesian space, understood as a continuum of points in $\mathbb{E}^3$, is represented by a continuum of fibers in $\mathbb{U}^4$. Each fiber is KS transformed to a point of the trajectory. The fibers form the \emph{fundamental manifold}, $\mansol$.

Equation~\eqref{Eq:initial_w0} defines the initial fiber $\mathcal{F}_0$, which yields a whole family of solutions parameterized by the angular variable $\vartheta$. Every trajectory $\myvec{w}(s)$ is confined to the fundamental manifold. Thanks to Thm.~\ref{Thm:angle} the manifold $\mansol$ can be constructed following a simple procedure: first, a reference trajectory $\myvec{u}(s)$ is propagated from any point in $\mathcal{F}_0$; then, mapping the transformation $\mathscr{R}$ over it renders a fiber $\mathcal{F}_i$ for each point $\myvec{u}(s_i)$ of the trajectory. The set $\cup_i\mathcal{F}_i$ defines $\mansol$. Recall that 
\begin{equation}
	\bigcap_i\mathcal{F}_i = \emptyset
\end{equation}
The fact that all trajectories emanating from $\mathcal{F}_0$ are confined to $\mansol$ is what makes an arbitrary choice of $\theta$ in Eq.~\eqref{Eq:inverse_KS} possible. The diagram in Fig.~\ref{Fig:diagram} depicts the construction of the fundamental manifold $\mansol$. 
\begin{figure}
	\centering
	\includegraphics[width=.65\linewidth]{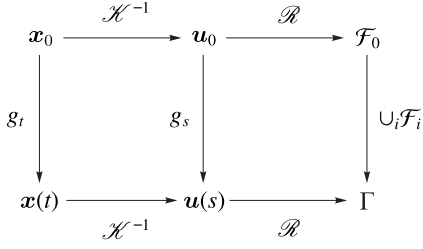}
	\caption{Construction of the fundamental manifold. The mapping $g_t:\myvec{x}_0\mapsto\myvec{x}(t)$ denotes the integration of the trajectory from $t_0$ to $t$. Similarly, $g_s$ refers to the propagation using the fictitious time.\label{Fig:diagram}}
\end{figure}

\subsection{Fixed points, limit cycles and attractors}
Points in $\mathbb{E}^3$ transform into fibers in $\mathbb{U}^4$. Thus, a fixed point in Cartesian space, $\myvec{x}_0$, translates into a fixed fiber in KS space, $\mathcal{F}_0$. Asymptotically stable fixed fibers (to be defined formally in the next section) attract the fundamental manifold of solutions, $\mansol\to\mathcal{F}_0$. Asymptotic instability is equivalent to the previous case under a time reversal.

Limit cycles are transformed to fundamental manifolds, referred to as limit fundamental manifolds $\mansol_0$. A fundamental manifold $\mansol$ originating in the basin of attraction of a limit fundamental manifold will converge to it after sufficient time. For $\mansol\to\mansol_0$ convergence means that each fiber in $\mansol$ approaches the corresponding fiber in $\mansol_0$. Correspondence between fibers is governed by the $t$-synchronism.

In a more general sense, attractors in $\mathbb{U}^4$ are invariant sets of the flow. The point-to-fiber correspondence connects attractors in $\mathbb{E}^3$ with attractors in KS space. The basin of attraction of an attractive set $Y_u\subset\mathbb{U}^4$ is built from its definition in 3-dimensions. Let $X\subset\mathbb{E}^3$ be the basin of attraction of $Y$. It can be transformed to KS space, $X\to X_u$, thanks to
\begin{equation}
	X_u = (\mathscr{R}\circ\mathscr{K}^{-1})(X) = \mathscr{R}\big(\mathscr{K}^{-1}(X) \big) 
\end{equation}
This construction transforms arbitrary sets in $\mathbb{E}^3$ to $\mathbb{U}^4$. The inverse KS transformation constitutes a dimension raising mapping, so in general $\dim(X_u)=\dim(X)+1$.

%Following the Poincar\'e-Bendixson theorem for second order dynamical systems bounded solutions in $\mathbb{U}^4$ will converge to fixed fibers, to fixed fundamental manifolds, or to sets composed of fixed fibers and fixed fundamental manifolds.

\subsection{Relative dynamics and synchronism}
The theories about the local stability of dynamical systems are based on the relative dynamics between nearby trajectories. The concepts of stability formalize how the separation between two (initially close) trajectories evolves in time. But the concept of \emph{time evolution} requires a further discussion because of having introduced an alternative time variable via the Sundman transformation.

Keplerian motion is known to be Lyapunov unstable \citep[see][for example]{baumgarte1972numerical}. Small differences in the semimajor axes of two orbits result in a separation that grows in time because of having different periods. However, Kepler's problem transforms into a harmonic oscillator by means of the KS transformation, with the fictitious time being equivalent to the eccentric anomaly. The resulting system is stable: for fixed values of the eccentric anomaly the separation between points in each orbit will be small, because of the structural (or Poincar\'e) stability of the motion. These considerations are critical for the numerical integration of the equations of motion. But in this paper we seek a theory of stability in $\mathbb{U}^4$ expressed in the language of the physical time $t$, because of its physical and practical interest. The conclusions about the stability of the system will be equivalent to those obtained in Cartesian space.

The spectrum of the linearized form of Kepler's problem written in Cartesian coordinates,
\begin{equation}
	\oder{^2\myvec{r}}{\,t^2} = -\frac{\myvec{r}}{r^3}
\end{equation}
exhibits one eigenvalue with positive real part, $\lambda=\sqrt{2/r^3}$. Lyapunov's theory of linear stability states that the system is unstable.

Under the action of the KS transformation Kepler's problem transforms into
\begin{equation}\label{Eq:kepler_KS}
	\oder{^2\myvec{u}}{s^2} = -\frac{h}{2}\,\myvec{u}
\end{equation}
where $h$ is minus the Keplerian energy. Although the linear analysis is not useful in this case, selecting a candidate Lyapunov function $V(\myvec{u},\myvec{u}^\prime)=h(\myvec{u}\cdot\myvec{u})/4+(\myvec{u}^\prime\cdot\myvec{u}^\prime)/2$ the stability of the system is proved. In order to represent the Lyapunov instability of the motion with respect to time $t$ the Sundman transformation needs to be considered. Given two circular orbits of radii $r_1$ and $r_2$, the time delay between both solutions reads
\begin{equation}
	\Delta t = t_2 - t_1 = (r_2-r_1)s
\end{equation}
The time delay grows with fictitious time and small values of $r_2-r_1$ do not guarantee that $\Delta t$ remains small. 

This phenomenon relates to the synchronism of the solutions \citep{roa2015error,roa2016theoryI}. Solutions to the system defined in Eq.~\eqref{Eq:kepler_KS} are stable if they are synchronized in fictitious time, but unstable if they are synchronized in physical time. We adopt this last form of synchronism for physical coherence.

\subsection{Stability of the fundamental manifold}

\subsubsection{Lyapunov stability}
A trajectory $\myvec{r}(t)$ in $\mathbb{E}^3$ is said to be Lyapunov stable if, for every small $\varepsilon>0$, there is a value $\delta>0$ such that for any other solution $\myvec{r}^\ast(t)$ satisfying $||\myvec{r}(t_0)-\myvec{r}^\ast(t_0)||<\delta$ it is $||\myvec{r}(t)-\myvec{r}^\ast(t)||<\varepsilon$, with $t>t_0$. In KS language trajectory translates into fundamental manifold. In order to extend the definition of Lyapunov stability accordingly an adequate metric $d$ to measure the distance between manifolds is required.

Let $\mansol_1$ and $\mansol_2$ be two (distinct) fundamental manifolds. The fibers in $\mansol_1$ can never intersect the fibers in $\mansol_2$. But both manifolds may share certain fibers, corresponding to the points of intersection between the two resulting trajectories in Cartesian space. The distance between the manifolds at $t\equiv t(s_1)=t(s_2)$ is the distance between the corresponding fibers. Setting $\theta$ to a reference value $\theta_\mathrm{ref}$ in Eq.~\eqref{Eq:inverse_KS} so that $\theta_1=\theta_2\equiv\theta_{\mathrm{ref}}$, we introduce the metric:
\begin{equation}\label{Eq:metric}
	%d(t;\mansol_1,\mansol_2) = \sup \big\{ ||\myvec{w}_1(s_1;\vartheta)-\myvec{w}_2(s_2;\vartheta)||:\vartheta\in[0,2\pi] \big\}
	%
	d(t;\mansol_1,\mansol_2) = \frac{1}{2\pi}\int_0^{2\pi}||\myvec{w}_1(s_1;\vartheta)-\myvec{w}_2(s_2;\vartheta)||\,\mathrm{d}\vartheta
\end{equation}
with $d(t;\mansol_1,\mansol_2)\equiv d(\mathcal{F}_1,\mathcal{F}_2)$. It is measured by computing the distance between points in $\mansol_1$ and $\mansol_2$ with the same value of $\vartheta$, and then integrating over the entire fiber. It is defined for given values of physical time, and not fictitious time. The reason is that the goal of this section is to define a theory of stability such that the fundamental manifold inherits the stability properties of the trajectory in Cartesian space. This theory is based on the physics of the system, not affected by a reformulation of the equations of motion.

Consider a fundamental manifold $\mansol$, referred to a nominal trajectory $\myvec{r}(t)$, and a second manifold $\mansol^\ast$ corresponding to a perturbed trajectory $\myvec{r}^\ast(t)$. If the nominal trajectory is Lyapunov stable, then for every $\varepsilon_u>0$ there is a number $\delta_u>0$ such that
\begin{equation}
	d(t_0;\mansol,\mansol^\ast) < \delta_u \implies d(t;\mansol,\mansol^\ast)<\varepsilon_u
\end{equation}
If the initial separation between the manifolds is small it will remain small according to the metric defined in Eq.~\eqref{Eq:metric}. 

The nominal solution $\myvec{r}(t)$ is said to be asymptotically stable if $||\myvec{r}(t)-\myvec{r}^\ast(t)||\to0$ for $t\to\infty$. Similarly, the fundamental manifold $\mansol$ will be asymptotically stable if $d(t;\mansol,\mansol^\ast)\to0$ for sufficiently large times. The opposite behavior $d(t;\mansol,\mansol^\ast)\to\infty$ corresponds to an asymptotically unstable fundamental manifold. It behaves as if it were asymptotically stable if the time is reversed.

%Fixed points and limit cycles in $\mathbb{E}^3$ transform into fixed fibers and closed manifolds, respectively. They inherit the stability properties of the generating sets. Periodic orbits in general are represented by closed fundamental manifolds. Using the stereographic projection shown in Fig.~\ref{Fig:fibers} the resulting torus will be closed. Each fiber is crossed by the orbits on $\mansol$ at $t=nT$, with $T$ the period of the motion.
 
\subsubsection{Poincar\'e maps and orbital stability}
The notion of Poincar\'e (or orbital) stability is particularly relevant when analyzing the fundamental manifold due to its geometric implications. Kepler's problem is unstable in the sense of Lyapunov but it is orbitally stable: disregarding the time evolution of the particles within their respective orbits, the separation between the orbits remains constant. 

The definition of the Poincar\'e map in $\mathbb{E}^3$ involves a 2-dimensional section $\Sigma$ that is transversal to the flow. Denoting by $\myvec{p}_1$, $\myvec{p}_2$,\ldots the successive intersections of a periodic orbit with $\Sigma$, the Poincar\'e map $\mathscr{P}$ renders
\begin{equation}
	\mathscr{P}(\myvec{p}_n) = \myvec{p}_{n+1}
\end{equation}
The generalization of the Poincar\'e section to KS space $\mathscr{K}:\Sigma\to\Sigma_u$ results in a subspace embedded in $\mathbb{U}^4$. In Sec.~\ref{Sec:velocity} we showed that the trajectories intersect the fibers at right angles, provided that the velocity $\myvec{u}^\prime$ is orthogonal to the vector tangent to the fiber. Thus, every fiber defines a section that is transversal to the flow. The transversality condition for $\Sigma$ translates into the section containing the fiber at $\myvec{u}$.

The Poincar\'e section $\Sigma_u$ can be constructed by combining the set of fibers that are KS transformed to points in $\Sigma$. Let $\myvec{n}=(n_x,n_y,n_z)^\top$ be the unit vector normal to $\Sigma$ in $\mathbb{E}^3$, projected onto an inertial frame. The Poincar\'e section takes the form
\begin{equation}\label{Eq:Sigma_poincare}
	\Sigma\equiv n_x (x-x_0) + n_y (y-y_0) + n_z (z-z_0) = 0
\end{equation}
where $(x_0,y_0,z_0)$ are the coordinates of the first intersection point. Equation~\eqref{Eq:Sigma_poincare} can be written in parametric form as $\Sigma(x(\eta,\xi),y(\eta,\xi),z(\eta,\xi))$, with $\eta$ and $\xi$ two free parameters. The extended Poincar\'e section $\Sigma_u$ is obtained by transforming points on $\Sigma$ to KS space and then mapping the fibration $\mathscr{R}$:
\begin{equation}
	\Sigma_u =  (\mathscr{R}\circ\mathscr{K}^{-1})(\Sigma) 
\end{equation}
The choice of the Poincar\'e section $\Sigma$ is not unique, and therefore the construction of $\Sigma_u$ is not unique either. The resulting Poincar\'e section $\Sigma_u$ is a subspace of dimension three embedded in $\mathbb{U}^4$. Indeed, the transformation $(\mathscr{R}\circ\mathscr{K}^{-1})(\Sigma)$ provides:
\begin{equation}
	\Sigma\mapsto \Sigma_u(u_1(\eta,\xi,\vartheta),u_2(\eta,\xi,\vartheta),u_3(\eta,\xi,\vartheta),u_4(\eta,\xi,\vartheta))
\end{equation}
meaning that points in $\Sigma_u$ are fixed by three parameters, $(\eta,\xi,\vartheta)$. The dimension is raised by $(\mathscr{R}\circ\mathscr{K}^{-1})$.

The intersection between a given fundamental manifold and the Poincar\'e section $\Sigma_u$ results in a fiber,
\begin{equation}
	\mansol\cap\Sigma_u=\mathcal{F}
\end{equation}
Successive intersections can be denoted $\mathcal{F}_1$, $\mathcal{F}_2$,\ldots. The Poincar\'e map in $\mathbb{U}^4$, $\mathscr{P}:\Sigma_u\to\Sigma_u$, is
\begin{equation}
	\mathscr{P}(\mathcal{F}_n) = \mathcal{F}_{n+1}
\end{equation}  
Every point in a fiber intersects $\Sigma_u$ simultaneously. Due to the $\mathscr{R}$-invariance of the Sundman transformation the time period between crossings is the same for every trajectory connecting $\mathcal{F}_n$ and $\mathcal{F}_{n+1}$.

Let $\mansol$ denote a fundamental manifold representing a nominal periodic orbit, and let $\mansol^\ast$ be a perturbed solution. They differ in the conditions at the first $\Sigma$-crossing, $\mathcal{F}_1$ and $\mathcal{F}_1^\ast$ respectively. The manifold $\mansol$ is said to be Poincar\'e (or orbitally) stable if
\begin{equation}
	d(\mathcal{F}_1^\ast,\mathcal{F}_1)<\delta_u \implies d(\mathscr{P}^n(\mathcal{F}_1^\ast),\mathcal{F}_1)<\varepsilon_u
\end{equation}
If the separation between the fibers at the first crossing is small, the separation will remain small after $n$ crossings.

%\clearpage
%\newpage

\section{Order and chaos}\label{Sec:chaos}
In the previous section we generalized the key concepts of dynamical stability to KS space. The approach we followed aims for a theory that captures the physical properties of the system, instead of focusing on its purely numerical conditioning. The next step is the analysis of chaos in $\mathbb{U}^4$.

Chaotic systems are extremely sensitive to numerical errors due to the strong divergence of the integral flow. This is specially important in the vicinity of singularities, and it is precisely here where KS regularization exhibits all its potential. This section focuses on characterizing the exponential divergence of trajectories in $\mathbb{U}^4$ due to highly unstable dynamics. 

%\subsection{Ordered structure}

By definition the fundamental manifold is mapped to a trajectory in $\mathbb{E}^3$. The equations of motion in $\mathbb{U}^4$ are no more than a reformulation of a dynamical system originally written in $\mathbb{E}^3$. For sufficiently smooth perturbations the Picard-Lindel\"of theorem ensures the uniqueness of the solution. Thus, the corresponding fundamental manifold is also unique and its KS transform defines only one trajectory. This means that any trajectory in the fundamental manifold is mapped to the same exact trajectory in $\mathbb{E}^3$, no matter the position within the initial fiber. An observer in three-dimensional space, unaware of the extra degree of freedom introduced by the gauge $\mathscr{R}$, will always perceive the same trajectory no matter the values of $\vartheta$.

\subsection{The $\mathscr{K}$-separation}
In order to integrate the equations of motion numerically in $\mathbb{U}^4$ the initial values of $\myvec{u}_0$ and $\myvec{u}_0^\prime$ need to be fixed. This means choosing a point in the fiber $\mathcal{F}_0$. Since all the points in $\mathcal{F}_0$ are KS transformed to the same exact state vector in $\mathbb{E}^3$, the selection of the point is typically arbitrary. But for an observer in $\mathbb{U}^4$ different values of $\vartheta$ yield different initial conditions, and therefore the initial value problem to be integrated may behave differently. Ideally\footnote{Due to the limited precision of floating point arithmetic, even the fact that all points generated with Eq.~\eqref{Eq:initial_w0} and varying $\vartheta$ will be KS-transformed to the same exact point in $\mathbb{E}^3$ should be questioned. The loss of accuracy in the computation of the initial conditions in $\mathbb{U}^4$ will eventually introduce errors of random nature. As a result, Eq.~\eqref{Eq:initial_w0} provides points that are not exactly in the true fiber. Although the separation is small (of the order of the round-off error) and negligible in most applications, it may have an impact on the numerical integration of chaotic systems.} all trajectories emanating from $\mathcal{F}_0$ remain in the same fundamental manifold, that is unique. However, numerical errors leading to the exponential divergence of the trajectories can cause the trajectories to depart from the fundamental manifold. In other words, after sufficient time two trajectories originating from the same fiber $\mathcal{F}_0$, $\myvec{w}_0=\mathscr{R}(\vartheta;\myvec{u}_0)$, will no longer define the same fiber $\mathcal{F}(s)$, $\myvec{w}(s)\neq\mathscr{R}(\vartheta;\myvec{u}(s))$. In this case Thm.~\ref{Thm:angle} will be violated. Multiple fundamental manifolds will appear, obtained by mapping the transformation $\mathscr{R}$ over each of the trajectories. The observer in $\mathbb{E}^3$ will see a collection of trajectories that depart from the same exact state vector and they separate in time, as if the problem had a random component. This behavior can only be understood in four dimensions.

These topological phenomena yield a natural way of measuring the error growth in KS space without the need of a precise solution. Let $\myvec{u}(s)$ be a reference trajectory in $\mathbb{U}^4$, and let $\myvec{w}(s)$ be a second trajectory defined by $\myvec{w}_0=\mathscr{R}(\vartheta;\myvec{u}_0)$. It is possible to build the fundamental manifold $\mansol$ from the solution $\myvec{u}(s)$. The second solution is expected to be $\myvec{w}^\ast(s)=\mathscr{R}(\vartheta;\myvec{u}(s))$ by virtue of Thm.~\ref{Thm:angle}. When numerical errors are present $\myvec{w}(s)$ and its expected value $\myvec{w}^\ast(s)$ (the projection of the fundamental manifold) may not coincide. Note that $\myvec{w}(s)=\myvec{w}^\ast(s)$ ensures the uniqueness of the solution, but says nothing about its accuracy. The separation between $\myvec{w}(s)$ and its projection on $\mansol$ is an indicator of the breakdown of the topological structure supporting the KS transformation, meaning that the solutions can no longer be trusted.

Motivated by this discussion we introduce the concept of the $\mathscr{K}$-separation, $d_\mathscr{K}$
\begin{equation}\label{Eq:K-separation}
	d_\mathscr{K}(s) = || \mathbfit{w}(s) - \mathbfit{w}^\ast(s)  || = || \mathbfit{w}(s) - \mathscr{R}(\vartheta;\mathbfit{u}^\ast(s))  ||
\end{equation} 
defined as the Euclidean distance between an integrated trajectory and its projection on the manifold of solutions. Monitoring the growth of the $\mathscr{K}$-separation is a way of quantifying the error growth of the integration. In the context of $N$-body simulations, \cite{quinlan1992reliability} discuss how the separation between nearby trajectories evolves: the divergence is exponential in the linear regime when the separation is small, but the growth rate is reduced when the separation is large. At this point the separation might be comparable to the interparticle distance. The $\mathscr{K}$-separation will grow exponentially at first (for $d_\mathscr{K}\ll1$) until it is no longer small ($d_\mathscr{K}\sim\mathcal{O}(1)$), and then its growth slows down. Locating the transition point is equivalent to finding the time scale $t_\mathrm{cr}$ in which the solution in KS space can no longer be trusted: for $t<t_\mathrm{cr}$ the topological structure of $\mathbb{U}^4$ is preserved, but for $t>t_\mathrm{cr}$ the uniqueness of the manifold of solutions $\mansol$ is not guaranteed.

For $t<t_\mathrm{cr}$ the $\mathscr{R}$-invariance of the Sundman transformation holds. The time for all the points in a fiber coincides. Thus, $t_\mathrm{cr}$ and $s_\mathrm{cr}$ are interchangeable: at $t<t_\mathrm{cr}$ it is also $s<s_\mathrm{cr}$. The behavior of the solutions can be equally analyzed in terms of the physical or the fictitious time.

In practice the $\mathscr{K}$-separation is evaluated as follows:
\begin{enumerate}
	\item Choosing a reference $\theta$ in Eqs.~\eqref{Eq:inverse_KS} and \eqref{Eq:ic1}, for example $\theta=0$, integrate $\myvec{u}^\ast(s)$.
	\item Propagate a second trajectory $\myvec{w}(s)$ generated from Eq.~\eqref{Eq:initial_w0} with $\vartheta\neq0$.
	\item Build the expected trajectory $\myvec{w}^\ast(s)$ by mapping $\mathscr{R}(\vartheta)$ over $\myvec{u}^\ast(s)$, i.e. $\myvec{w}^\ast=\mymatrix{R}(\vartheta)\myvec{u}^\ast$. The $\mathscr{K}$-separation is the Euclidean distance between $\myvec{w}(s)$ and $\myvec{w}^\ast(s)$.
\end{enumerate}

\subsection{Topological stability}
The uniqueness of $\mansol$ can be understood as \emph{topological stability}. KS space is said to be topologically stable if all the trajectories emanating from the same fiber define a unique manifold of solutions, and therefore they are all KS-transformed to the same trajectory in $\mathbb{E}^3$. For an observer in $\mathbb{E}^3$ a topologically unstable system seems non-deterministic, with solutions departing from the same initial conditions but separating in time with no apparent reason. 

A system is topologically stable in the interval $t<t_\mathrm{cr}$. The trajectories diverge exponentially,
\begin{equation}
	d_\mathscr{K}(t)/d_\mathscr{K}(0)\sim\mathrm{e}^{\gamma_t t}\qquad \text{or}\qquad d_\mathscr{K}(s)/d_\mathscr{K}(0)\sim\mathrm{e}^{\gamma_s s}
\end{equation}
Here $\gamma$ is equivalent to a Lyapunov exponent. For $t>t_\mathrm{cr}$ this equation no longer models the growth of the $\mathscr{K}$-separation and the system is topologically unstable. Simulations over the transition time $t_\mathrm{cr}$ integrated in $\mathbb{U}^4$ can no longer be trusted. Depending on the integrator, the integration tolerance, the floating point arithmetic, the compiler, etc. the values of $t_\mathrm{cr}$ for a given problem might change. Thus, topological stability is a property of a certain propagation, which requires all the previous factors to be defined.

The validity of the solution for an integration over the critical time $t_\mathrm{cr}$ is not guaranteed. When $t_\mathrm{cr}<t_\mathrm{esc}$ (with $t_\mathrm{esc}$ denoting the escape time) not even the value of $t_\mathrm{esc}$ can be estimated accurately. In such a case solutions initialized at different points in the fiber may yield different escape times.  

The method presented in this section provides an estimate of the interval in which the propagation is topologically stable. The exponent $\gamma$ depends on the integration scheme and the dynamics, but it is not strongly affected by the integration tolerance. An estimate of the value of $\gamma$ provides an estimate of the critical time for a given integration tolerance $\varepsilon$. Assuming $d_\mathscr{K}(t_\mathrm{cr})\sim1$:
\begin{equation}\label{Eq:estimate_tcr}
	t_\mathrm{cr}\sim -\frac{1}{\gamma_t}\log\varepsilon
\end{equation} 
Conversely, if the simulation needs to be carried out up to a given $t_f$, the required integration tolerance is approximately
\begin{equation}
	\varepsilon \sim \mathrm{e}^{-\gamma_t t_f}
\end{equation}
This simple criterion proves useful for tuning and evaluating the numerical integration. In the following examples of application the values of $\gamma_t$ are estimated by finding the slope of the exponential growth of the $\mathscr{K}$-separation in logarithmic scale. Although more rigorous algorithms could be developed, this approximation provides a good estimate of transition time between regimes.

%\subsection{Chaotic motion and the Hopf fibration}

%For example, \citet{von1994intramanifold} propagated the dynamics of Rydberg atoms in the KS space, and 

%\clearpage
%\newpage

\section{Topological stability in \textit{N}-body problems}\label{Sec:scattering}
Two examples of $N$-body problems are analyzed in this section. The first example is the Pythagorean three-body problem. The second example is a non-planar configuration of the four-body problem. This problem simulates the dynamics of two field stars interacting with a stellar binary. The experiments are designed for showing the practical aspects of the new concept of stability introduced in this paper: the topological stability of KS space.

The problems are integrated using the regularization of the $N$-body problem based on the KS transformation proposed by \citet{mikkola1985practical} as a reformulation of the method by \citet{heggie1974global}. The initialization of the method is modified so that different points on the initial fiber $\mathcal{F}_0$ can be chosen. This means generalizing the relative coordinates $\mathbfit{u}_{ij}$ to $\mathbfit{w}_{ij}$ by means of the transformation $\mathscr{R}:\mathbfit{u}\mapsto\mathbfit{w}$. The trajectories depart from the same initial conditions in $\mathbb{E}^3$. We inherit the normalization proposed in the referred paper, so that the gravitational constant is equal to one. 

Heggie-Mikkola's method is implemented in Fortran. As \citeauthor{mikkola1985practical} recommends, the problem is integrated with the \citet{bulirsch1966numerical} extrapolation scheme \citep[\S II.9]{hairer1991solvingI}. The total $\mathscr{K}$-separation is computed by combining the $\mathscr{K}$-separations for the relative dynamics of each pair of bodies. Writing $\myvec{u}_{ij}\equiv\myvec{u}_\ell$ it is 
\begin{equation}
	d_\mathscr{K} = \sqrt{\sum_{\ell}d^2_{\mathscr{K},\ell}}
\end{equation}
where $d_{\mathscr{K},\ell}$ is the $\mathscr{K}$-separation computed for $\myvec{u}_\ell$.

\subsection{The Pythagorean three-body problem}
Originally developed by \citet{burrau1913numerische}, the Pythagorean problem consists in three particles of masses $m_1=3$, $m_2=4$, and $m_3=5$. The particles will be denoted \textsf{(1)}, \textsf{(2)}, and \textsf{(3)}. At $t=0$ the bodies are at rest and lying on the vertices of a Pythagorean right triangle of sides 3, 4, and 5. The initial conditions are summarized in Table~\ref{Tab:pythagorean}. This problem has been solved and discussed in detail by \citet{szebehely1967complete}, so the solution to the problem is known. 

\begin{table}
	%\centering
	\caption{Dimensionless initial configuration of the Pythagorean problem. ``Id'' refers to the identification index of each body.\label{Tab:pythagorean}}
	\begin{tabular*}{\linewidth}{@{\extracolsep{\fill}}lrrrrrr@{}}
		\hline\noalign{\smallskip}
		Id &  $x$ & $y$ & $z$ & $v_x$ & $v_y$ & $v_z$ \\
		\noalign{\smallskip}\hline\noalign{\smallskip}
		\textsf{(1)} 	& 1.0000 &  3.0000 &  0.0000 &  0.0000 & 0.0000 &  0.0000 \\
		\textsf{(2)} 	&-2.0000 & -1.0000 &  0.0000 &  0.0000 & 0.0000 &  0.0000 \\
		\textsf{(3)} 	& 1.0000 & -1.0000 &  0.0000 &  0.0000 & 0.0000 &  0.0000 \\
		\noalign{\smallskip}\hline
	\end{tabular*}
\end{table}

The solution is displayed in Fig.~\ref{Fig:pythagorean}. Initially the bodies approach the origin and after a number of close-approaches body \textsf{(1)} is ejected along a trajectory in the first quadrant, whereas \textsf{(2)} and \textsf{(3)} form a binary that escapes in the opposite direction. The escape occurs at approximately $t_\mathrm{esc}\sim60$. The solution shown in the figure is obtained by setting the integration tolerance to $\varepsilon=10^{-13}$.

\begin{figure}
	\centering
	\includegraphics[width=\linewidth]{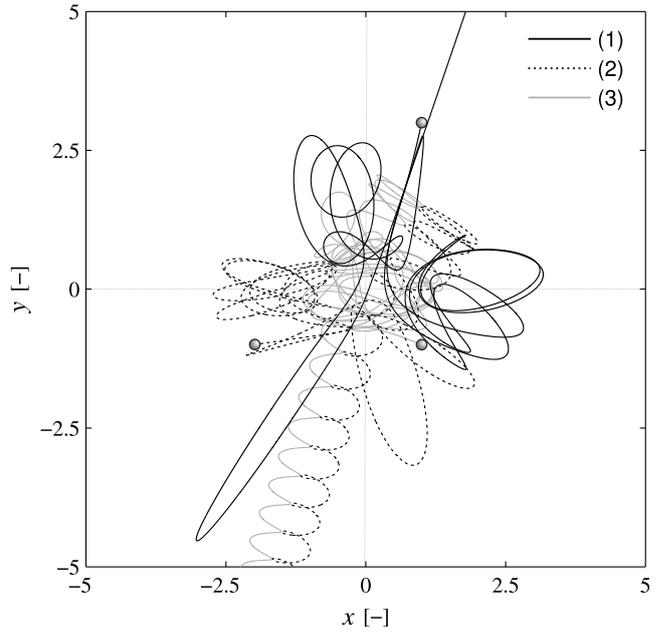}
	\caption{Solution to the Pythagorean three-body problem. The thick dots represent the initial configuration of the system.\label{Fig:pythagorean}}
\end{figure}

The problem is first integrated from a set of initial conditions obtained with $\theta=0$ in Eq.~\eqref{Eq:inverse_KS}. Then, a second trajectory initialized with $\theta\equiv\vartheta=120\degr$ is integrated and their $\mathscr{K}$-separation is shown in Fig.~\ref{Fig:Ksep_pythagorean}. After a transient the separation grows exponentially with $\gamma_t\sim5/12$, and no transitions are observed until the escape time ($t_\mathrm{cr}>t_\mathrm{esc}$). As discussed in the previous section this is equivalent to saying that the $\mathscr{K}$-separation remains small, and consequently the integration in $\mathbb{U}^4$ is topologically stable. The transformed solution in $\mathbb{E}^3$ will be unique no matter the initial position in the fiber $\mathcal{F}_0$.
\begin{figure}
	\centering
	\includegraphics[width=\linewidth]{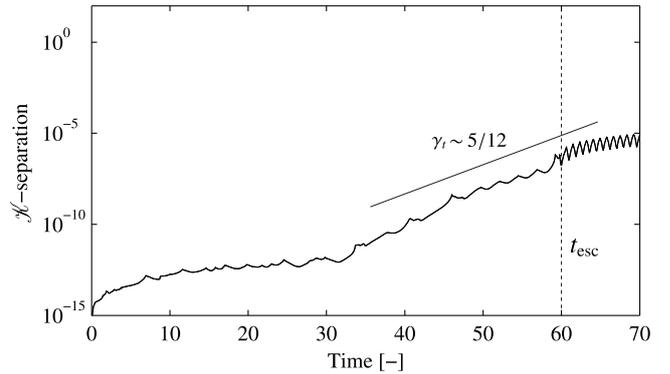}
	\caption{$\mathscr{K}$-separation for the Pythagorean problem computed from a reference trajectory with $\theta=0$ and $\vartheta=120\degr$.\label{Fig:Ksep_pythagorean}}
\end{figure}

\subsection{Field stars interacting with a stellar binary}

This second example analyzes the gravitational interaction of a binary system \textsf{(1,2)}, of masses $m_1=m_2=5$, with two incoming field stars \textsf{(3)} and \textsf{(4)} of masses $m_3=m_4=3$. The initial conditions, presented in Table \ref{Tab:scattering0}, have been selected so that both field stars reach the binary simultaneously. 

\begin{table}
	%\centering
	\caption{Dimensionless initial conditions for the binary system, \textsf{(1,2)}, and the field stars, \textsf{(3)} and \textsf{(4)}. ``Id'' refers to the identification index of each star.\label{Tab:scattering0}}
	\begin{tabular*}{\linewidth}{@{\extracolsep{\fill}}lrrrrrr@{}}
		\hline\noalign{\smallskip}
		Id &  $x$ & $y$ & $z$ & $v_x$ & $v_y$ & $v_z$ \\
		\noalign{\smallskip}\hline\noalign{\smallskip}
		\textsf{(1)} 	& 0.6245 &  0.6207 &  0.0000 & -0.7873 & 0.0200 & -0.0100 \\
		\textsf{(2)} 	& 0.6245 & -0.6207 &  0.0000 &  0.7873 & 0.0200 &  0.0100 \\
		\textsf{(3)} 	& 3.0000 &  3.0000 &  3.0000 & -0.3000 &-0.3000 & -0.3000 \\
		\textsf{(4)} 	&-5.0817 & -3.0000 & -3.0000 &  0.3000 & 0.2333 &  0.3000 \\
		\noalign{\smallskip}\hline
	\end{tabular*}
\end{table}

The manifold of solutions is constructed from a reference solution with $\theta=90\degr$. A second solution with $\theta=120\degr$ (or $\vartheta=30\degr$) is propagated and the corresponding $\mathscr{K}$-separation is plotted in Fig.~\ref{Fig:Ksep_4stars}. The are two different regimes in the growth of the $\mathscr{K}$-separation: the first part corresponds to the linear regime where the $\mathscr{K}$-separation is small, whereas in the second part the separation is no longer small. Both regimes are separated by {$t_\mathrm{cr}\sim42$}, when solutions in $\mathbb{E}^3$ no longer coincide. This result is in good agreement with the value predicted by Eq.~\eqref{Eq:estimate_tcr}, which is $t_\mathrm{cr}\sim40$. Since for $t=t_\mathrm{cr}$ the bodies have not yet escaped and the integration continues, the solution is topologically unstable. The escape time associated to the reference solution, {$t_\mathrm{esc}\sim75$}, might not be representative because it corresponds to the interval $t>t_\mathrm{cr}$.

\begin{figure}
	\centering
	\includegraphics[width=\linewidth]{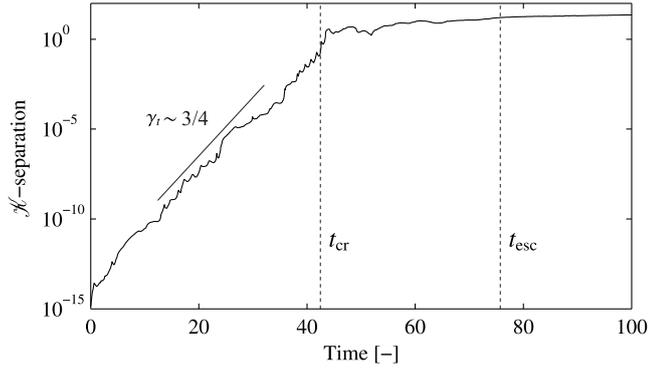}
	\caption{$\mathscr{K}$-separation for the four-body problem computed from a reference trajectory with $\theta=90\degr$ and a second trajectory with $\vartheta=30\degr$.\label{Fig:Ksep_4stars}}
\end{figure}

A direct consequence of the topological instability of the integration is the fact that solutions departing from the initial fiber $\mathcal{F}_0$ no longer represent the same solution in $\mathbb{E}^3$. Figure~\ref{Fig:solutions_4BP} shows two solutions that emanate from different points of the initial fiber. Ideally they should coincide exactly; but because the integration is topologically unstable for $t>t_\mathrm{cr}$ the difference between both solutions becomes appreciable and the accuracy of the integration over $t_\mathrm{cr}$ cannot be guaranteed. 

\begin{figure}
	\centering
	\includegraphics[width=\linewidth]{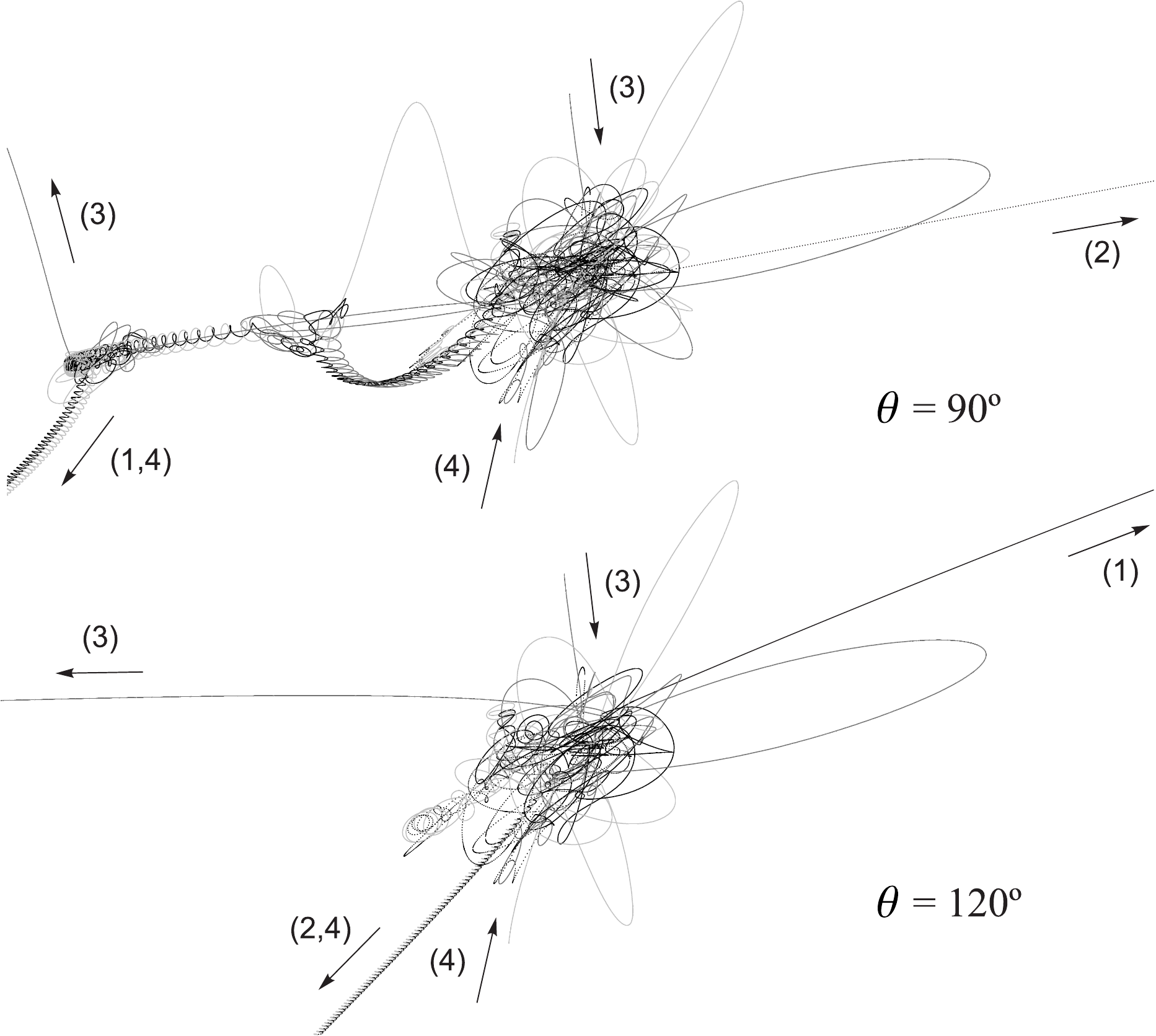}
	\caption{Two solutions to the four-body problem departing from the same fiber $\mathcal{F}_0$: the top figure corresponds to $\theta=90\degr$, and the bottom figure has been generated with $\theta=120\degr$.\label{Fig:solutions_4BP}}
\end{figure}

The topological instability is not directly related to the conservation of the energy. Although for $t>t_\mathrm{cr}$ the integration becomes topologically unstable, Fig.~\ref{Fig:energy_4BP} shows that the energy is conserved down to the integration tolerance until $t_\mathrm{esc}$, well beyond $t_\mathrm{cr}$. This is a good example of the fact that the preservation of the integrals of motion is a necessary but not sufficient condition for concluding that a certain integration is correct. 
\begin{figure}
	\centering
	\includegraphics[width=\linewidth]{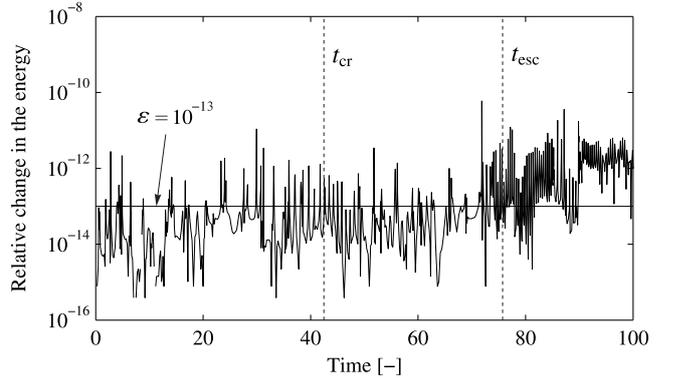}
	\caption{Relative change in the energy referred to its initial value, $(E(t)-E_0)/E_0$.\label{Fig:energy_4BP}}
\end{figure}

%\begin{table}
%	\centering
%	\caption{Critical time $t_\mathrm{cr}$ for different integrations of the four-body problem. %\label{Tab:tolerances4BP}}
%	\begin{tabular*}{\linewidth}{@{\extracolsep{\fill}}lcccc@{}}
%		\hline\noalign{\smallskip}
%		$\varepsilon$ & $10^{-10}$ &  $10^{-13}$ & $10^{-15}$ & $10^{-17}$  \\
%		\hline\noalign{\smallskip}
%		$t_\mathrm{cr}$ & 35 & 42 & 65 & 73 \\
%		\hline
%	\end{tabular*}
%\end{table}

The evolution of the $\mathscr{K}$-separation depends on the integration scheme and the tolerance. In order to analyze this dependency Fig.~\ref{Fig:Ksep_4stars_tols} shows the results of integrating the problem with four different tolerances and of changing from double to quadruple precision floating point arithmetic. It is observed that refining the integration tolerance might extend the interval of topological stability. However, the dynamics of the system remain chaotic and the solutions will eventually diverge for sufficiently long times.

\begin{figure}
	\centering
	\includegraphics[width=\linewidth]{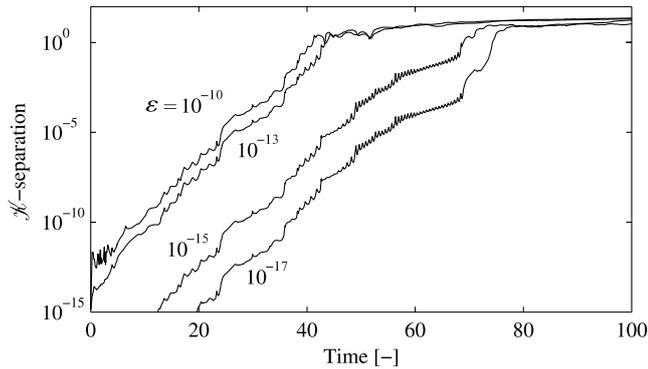}
	\caption{$\mathscr{K}$-separation for the four-body problem for different integration tolerances. The solutions for $\varepsilon=10^{-15}$ and $10^{-17}$ are computed in quadruple precision floating point arithmetic.\label{Fig:Ksep_4stars_tols}}
\end{figure}

\section{Conclusions}

The topology of the KS transformation has important consequences in the stability and accuracy of the solutions in KS space. There are two key aspects to consider when studying the stability of the motion. First, the presence of a fictitious time that replaces the physical time as the independent variable. Second, the dimension-raising nature of the Hopf fibration.

Classical theories of stability are based on the separation between nearby trajectories. Having introduced a fictitious  time, the question on how to synchronize the trajectories arises. The numerical stabilization of the equations of motion by KS regularization relates to solutions synchronized in fictitious time. But a theory of stability synchronized in physical time allows the translation of concepts such as attractive sets, Lyapunov stability, Poincar\'e maps, etc. to KS language.

The additional dimension provides a degree of freedom to the solution in parametric space. In general the free parameter can be fixed arbitrarily, with little or no impact on the resulting trajectory in Cartesian space. However, as strong perturbations destabilize the system, different values of the free parameter may result in completely different solutions in time. This phenomenon is caused by numerical errors and destroys the topological structure of KS transformation: points in a fiber are no longer transformed into one single point. By monitoring the topological stability of the integration it is possible to estimate an indicator similar to the Lyapunov time.

%The concept of the fundamental manifold appears naturally when understanding the geometry of the parametric space. It is based on geometric results related to the configuration of the trajectories

\section*{Acknowledgements}

This work is part of the research project entitled ``Dynamical Analysis, Advanced Orbit Propagation and Simulation of Complex Space Systems'' (ESP2013-41634-P) supported by the Spanish Ministry of Economy and Competitiveness. Authors thank the Spanish Government for its support. J. Roa specially thanks ``La Caixa'' for his doctoral fellowship and S. Le Maistre for motivating him to work on $N$-body problems.

%%%%%%%%%%%%%%%%%%%%%%%%%%%%%%%%%%%%%%%%%%%%%%%%%%

%%%%%%%%%%%%%%%%%%%% REFERENCES %%%%%%%%%%%%%%%%%%

% The best way to enter references is to use BibTeX:

%\bibliographystyle{mnras}
%\bibliography{../../../Reports/Biblioteca-Libros/Papers/Papers_index,../../../Reports/Biblioteca-Libros/Libros/Library}

% Alternatively you could enter them by hand, like this:
% This method is tedious and prone to error if you have lots of references
%\begin{thebibliography}{99}
%\bibitem[\protect\citeauthoryear{Author}{2012}]{Author2012}
%Author A.~N., 2013, Journal of Improbable Astronomy, 1, 1
%\bibitem[\protect\citeauthoryear{Others}{2013}]{Others2013}
%Others S., 2012, Journal of Interesting Stuff, 17, 198
%\end{thebibliography}

%%%%%%%%%%%%%%%%%%%%%%%%%%%%%%%%%%%%%%%%%%%%%%%%%%

%%%%%%%%%%%%%%%%% APPENDICES %%%%%%%%%%%%%%%%%%%%%

\appendix

\section{Orthogonal bases}\label{Sec:Appendix}

\subsection{Basis attached to the fiber}\label{Sec:basis_fiber}
In Section~\ref{Sec:velocity} it is shown that fibers are circles in $\mathbb{U}^4$. Let $\myvec{u}$ and $\myvec{w}=\mathscr{R}(\vartheta;\myvec{u})$ be two vectors attached to a fiber $\mathcal{F}$. They span a plane containing the fiber. This plane is not a plane of the Levi-Civita type\footnote{A plane of the Levi-Civita type, or $\mathscr{L}$-plane, is a plane spanned by two vectors satisfying the bilinear relation $\ell(\myvec{u},\myvec{w})=0$. Planes of this type are KS-transformed to planes in $\mathbb{E}^3$, and the mapping is conformal: angles are doubled and distances to the origin are squared \citep[][pp.~273--276]{stiefel1971linear}.} because $\ell(\myvec{u},\myvec{w})\neq0$. Since trajectories intersect fibers at right angles this subspace is transversal to the flow. An orthogonal basis can be attached to the resulting plane, allowing projections on the transversal subspace. Although arbitrary orthonormal bases can be constructed via the Gram-Schmidt procedure \citep[][pp.~529--530]{nayfeh2004applied}, the basis described in this section appears naturally in the formulation.

Associated to every vector $\myvec{u}$ there is a KS matrix $\mymatrix{L}(\myvec{u})$. The columns of the matrix define a vector basis $\mathfrak{B}=\{\myvec{u}^1,\myvec{u}^2,\myvec{u}^3,\myvec{u}^4\}$, with $\myvec{u}^1\equiv\myvec{u}$. The basis $\mathfrak{B}$ is orthogonal,
\begin{equation}
	\myvec{u}^i\cdot \myvec{u}^j = r\delta_{ij}
\end{equation}
Here $\delta_{ij}$ denotes Kronecker's delta. Assuming $x\geq0$ every point in the fiber generated by $\myvec{w}=\mathscr{R}(\vartheta;\myvec{u})$ lies in the plane spanned by $\myvec{u}^1$ and $\myvec{u}^4$, i.e. $\myvec{w}\cdot\myvec{u}^2=\myvec{w}\cdot\myvec{u}^3=0$ for all $\vartheta$. Conversely, for $x<0$ the fiber is confined to the $\myvec{u}^2\myvec{u}^3$ plane. The basis $\mathfrak{B}$ is an orthogonal basis attached to the fiber at $\myvec{u}$. In addition,
\begin{equation}
	\myvec{u}^4\cdot\myvec{u}^\prime = \ell(\myvec{u},\myvec{u}^\prime) = 0
\end{equation}
meaning that $\myvec{u}^\prime$ is perpendicular to $\myvec{u}^4$. In fact, $\myvec{u}^4=-\myvec{t}$, as shown by Eq.~\eqref{Eq:tangent}.

The vectors arising from the products $\mymatrix{L}(\myvec{u})\myvec{u}^i$, $i=1,2,3$ have a vanishing fourth component. They are equivalent to vectors in $\mathbb{E}^3$. However, the fourth component of the product $\mymatrix{L}(\myvec{u})\myvec{u}^4$ is not zero. The three vectors obtained by these transformations correspond to the position vector $\myvec{r}$, and a pair of orthogonal vectors spanning the plane tangential to the 2-sphere at $\myvec{r}$ in $\mathbb{E}^3$. These vectors are the columns of the associated Cailey matrix.

\subsection{Cross product}
\citet[][pp.~277--281]{stiefel1971linear} sought a definition of cross product in the parametric space $\mathbb{U}^4$ when discussing the orthogonality conditions of vectors and Levi-Civita planes. Although the cross product of two vectors in $\mathbb{R}^3$ is intuitive, its generalization to higher dimensions is not straightforward. Independent proofs from different authors \citep[see for example][]{brown1967vector} show that the cross product of two vectors only exists in dimensions $1,3,7$; for $n$ dimensions the cross product involves $n-1$ vectors. \citet{stiefel1971linear} defined the product $\myvec{p}=\myvec{u}\times\myvec{v}$ as
\begin{equation}
	\myvec{p} = \mymatrix{L}(\myvec{u})\,\myvec{v}^4
\end{equation}
where $\myvec{v}^4=(v_4,-v_3,v_2,-v_1)^\top$ is the fourth column of $\mymatrix{L}(\myvec{v})$. The properties of this construction motivated the authors to call $(p_1,p_2,p_3)$ the cross product of $\myvec{u}\times\myvec{v}$, with $p_4=\myvec{u}\cdot\myvec{v}$. In the following lines we analyze in more detail this construction and connect with alternative definitions provided by \citet{vivarelli1987geometrical} and \citet{deprit1994linearization}.

%The question on how to define the cross product of two vectors in $\mathbb{U}^4$ also occupied 

Let $\{\myvec{e}_1,\myvec{e}_2,\ldots,\myvec{e}_n\}$ be an orthogonal basis in $\mathbb{R}^n$. The Grassmann exterior product gives rise to the bivectors $\myvec{e}_i\wedge\myvec{e}_j$, trivectors $\myvec{e}_i\wedge\myvec{e}_j\wedge\myvec{e}_k$, and successive blades of grade $m\leq n$ \citep[][\S II]{flanders1989differential}. They constitute the subspaces $\bigwedge^m\mathbb{R}^n$ of the exterior algebra:
\begin{equation}
	\bigwedge\mathbb{R}^n= \mathbb{R} \,\oplus\,\mathbb{R}^n \,\oplus\, \bigwedge^2\mathbb{R}^n \,\oplus\,\ldots\,\oplus\,\bigwedge^n\mathbb{R}^n   % \mathbb{R}\,\oplus\,\mathbb{R}^3\,\oplus\,\bigwedge^2\mathbb{R}\,\oplus\,\bigwedge^3\mathbb{R}
\end{equation}
noting that $\bigwedge^0\mathbb{R}^n=\mathbb{R}$ and $\bigwedge^1\mathbb{R}^n=\mathbb{R}^n$. Without being exhaustive we simply recall that such exterior algebra is associative with unity, satisfying $\myvec{e}_i\wedge\myvec{e}_j=-\myvec{e}_j\wedge\myvec{e}_i$ and $\myvec{e}_i\wedge\myvec{e}_i=0$. The exterior product of two parallel vectors vanishes. We shall write $\myvec{e}_{ij\ldots k}=\myvec{e}_i\wedge\myvec{e}_j\wedge\ldots\wedge\myvec{e}_k$ for brevity.

In Section~\ref{Sec:basis_fiber} an orthogonal basis attached to $\myvec{u}$ was defined, where two of its vectors are KS-transformed to vectors spanning the plane tangent to the 2-sphere in $\mathbb{E}^3$. Identifying $\myvec{u}^i=\sqrt{r}\,\myvec{e}_i$, the exterior product of vectors $\{\myvec{u}^1,\myvec{u}^2,\myvec{u}^3,\myvec{u}^4\}$ generates the oriented hypervolume
\begin{equation}
	\myvec{u}^1\wedge\myvec{u}^2\wedge\myvec{u}^3\wedge\myvec{u}^4 = -r^2 \myvec{e}_{1234}  %(\myvec{e}_1\wedge\myvec{e}_2\wedge\myvec{e}_3\wedge\myvec{e}_4)
\end{equation}
provided that $\det(\mymatrix{L}(\myvec{u}))=-r^2$. In three dimensions the exterior product is equivalent to the cross product, given $\myvec{e}_1\times\myvec{e}_2=\myvec{e}_3$, $\myvec{e}_1\times\myvec{e}_3=-\myvec{e}_2$ and $\myvec{e}_2\times\myvec{e}_3=\myvec{e}_1$. Applying the cross product to the first three elements of $\mathfrak{B}$ provides
\begin{equation}
	\myvec{u}^1\times\myvec{u}^2\times\myvec{u}^3 = r \myvec{u}^4
\end{equation}
This result confirms that $\mathfrak{B}$ is, indeed, an orthogonal basis. 

\citet{vivarelli1987geometrical} and \citet{deprit1994linearization} worked in the more general Clifford algebra $\cliff_3$. Introducing the Clifford product of two vectors
\begin{equation}
	\myvec{a}\myvec{b} = \myvec{a}\cdot\myvec{b} + \myvec{a}\wedge\myvec{b}
\end{equation}
the exterior algebra over $\mathbb{R}^3$ can be identified with the Clifford algebra $\cliff_3$: bivectors and trivectors become $\myvec{e}_i\wedge\myvec{e}_j\to\myvec{e}_i\myvec{e}_j$ and $\myvec{e}_1\wedge\myvec{e}_2\wedge\myvec{e}_3\to\myvec{e}_1\myvec{e}_2\myvec{e}_3$. Note that $\myvec{b}\myvec{a} = \myvec{a}\cdot\myvec{b} - \myvec{a}\wedge\myvec{b}\neq\myvec{a}\myvec{b}$, so the Clifford product is not commutative. Thus, the even subalgebra $\cliff_3^+=\mathbb{R}\oplus\bigwedge^2\mathbb{R}^3$, isomorphic to the quaternion algebra $\mathbb{H}$, is not commutative either. 

The algebra $\mathbb{H}\simeq\cliff_3^+$ is endowed with the multiplication rules $\myvec{e}_1^2=\myvec{e}_2^2=\myvec{e}_3^2=-1$, and the product of vectors is anticommutative, $\myvec{e}_{i}\myvec{e}_j=-\myvec{e}_j\myvec{e}_i$. Identifying these bivectors with the quaternion basis elements
\begin{equation}
	\myvec{e}_2\myvec{e}_3 = \mathrm{i},\quad \myvec{e}_3\myvec{e}_1 = \mathrm{j},\quad \myvec{e}_1\myvec{e}_2 = \mathrm{k}
\end{equation}
the product of two quaternions $\myquat{u}$ and $\myquat{v}$ is established. \citet{vivarelli1987geometrical} rewrote \citeauthor{stiefel1971linear}'s form of the cross product in terms of the quaternion product
\begin{equation}
	\myquat{u}\times\myquat{v} = \frac{1}{2}(\myquat{u}\mathrm{k}\myquat{v}_\ast - \myquat{v}\mathrm{k}\myquat{u}_\ast)
\end{equation}
where $_\ast$ denotes the involution $\myquat{u}_\ast=u_1+u_2\mathrm{i}+u_3\mathrm{j}-u_4\mathrm{k}$. Disregarding the arrangement of the components, \citet{deprit1994linearization} defined the cross product of two quaternions as
\begin{equation}
	\myquat{u}\times\myquat{v} = \frac{1}{2}(\myquat{v}\myquat{u}^\dagger-\myquat{u}\myquat{v}^\dagger)
\end{equation}
Here $^\dagger$ denotes the quaternion conjugate. The difference between these quaternionic definitions and the original one from \citeauthor{stiefel1971linear} is the fact that $\myquat{u}\times\myquat{v}$ is a pure quaternion, i.e. $\Re(\myquat{u}\times\myquat{v})=0$, whereas the fourth component of \citeauthor{stiefel1971linear}'s product $\myvec{u}\times\myvec{v}$ is $\myvec{u}\cdot\myvec{v}$.

%\section{Some extra material}

%If you want to present additional material which would interrupt the flow of the main paper,
%it can be placed in an Appendix which appears after the list of references.

%%%%%%%%%%%%%%%%%%%%%%%%%%%%%%%%%%%%%%%%%%%%%%%%%%

% Don't change these lines
\bsp	% typesetting comment
\label{lastpage}
\end{document}